\documentclass[journal,comsoc]{IEEEtran}
\usepackage[T1]{fontenc}
\usepackage{cite}
\usepackage{amsmath}
\usepackage{url}
\usepackage{algorithm}
\usepackage{subfig}
\usepackage{graphicx}
\usepackage[noend]{algpseudocode}
\usepackage[flushleft]{threeparttable}
\usepackage{amssymb,amsmath,amsfonts,amsthm,graphicx,wrapfig,etoolbox,nomencl,xpatch,mathtools,tcolorbox,empheq}
\usepackage{color}
\theoremstyle{plain}

\newtheorem{lemma}{Lemma}

\newtheorem{theorem}{Theorem}
\newtheorem{proposition}{Proposition}

\theoremstyle{definition}
\newtheorem{definition}{Definition}
\theoremstyle{remark}

\newtheorem{remark}{Remark}

\newcommand{\R}{{\mathbb R}}

\newcommand{\txt}[1]{\text{#1}} 

\def\eg{\emph{e.g., }}
\def\ie{\emph{i.e., }} 
\def\dist{0em}  
\newcommand{\grhl}[1]{\textcolor{gray}{#1}} 


\DeclareMathOperator*{\diag}{diag}

\newcommand{\Sa}{{S_{\alpha,j}}}                                            
\newcommand{\CSa}{{\mathcal{S}_\alpha (e,\tilde{\tau})}}        
\newcommand{\CI}{{S^*_{\alpha,j}}}                              
\newcommand{\CIb}{{S^*_{\beta,j}}} 
\newcommand{\CCI}{{\mathcal{S}^*_\alpha (e,\tilde{\tau})}}      
\newcommand{\Sp}{{S'_{\alpha,j}}}                              
\newcommand{\CCIb}{{\mathcal{S}^*_\beta (e,\tilde{\tau})}}      

\newcommand{\Db}{{D_{\beta,e}}}                                
\newcommand{\Ri}{{R(\tilde{\tau})}}                            
\newcommand{\Rb}{{R_{\beta,s_k}(\tilde{\tau})}}                
\newcommand{\DRb}{{\Delta R_{\beta,s_k}(\tilde{\tau})}}        

\newcommand{\eT}{{(e,\tilde{\tau})}} 
\begin{document}
\title{Developing Correlation Indices to Identify Coordinated Cyber-Attacks on Power Grids}
\author{Christian~Moya,~\IEEEmembership{Student~Member,~IEEE,}
        and~Jiankang~Wang,~\IEEEmembership{Member,~IEEE}
\thanks{C. Moya and J.K. Wang are with the Department of Electrical and Computer Engineering, The Ohio State University, Columbus, OH, 43210 USA e-mail: \{moyacalderon.1,wang.6536\}@osu.edu.}}

\maketitle
\begin{abstract}
Increasing reliance on Information and Communication Technology~(ICT) exposes the power grid to cyber-attacks. In particular, Coordinated Cyber-Attacks (CCAs) are considered highly threatening and difficult to defend against, because they (i) possess higher disruptiveness by integrating greater resources from multiple attack entities, and (ii) present heterogeneous traits in cyber-space and the physical grid by hitting multiple targets to achieve the attack goal. Thus, and as opposed to independent attacks, whose severity is limited by the power grid's redundancy, CCAs could inflict  disastrous consequences, such as blackouts. In this paper, we propose a method to develop Correlation Indices to defend against CCAs on static control applications. These proposed indices relate the targets of CCAs with attack goals on the power grid.  Compared to related works, the proposed indices present the benefits of deployment simplicity and are capable of detecting more sophisticated attacks, such as measurement attacks. We demonstrate our method using measurement attacks against Security Constrained Economic Dispatch. 
\end{abstract}
\begin{IEEEkeywords}
Power Grid, Cyber-Physical Systems, Cyber-Security, Coordinated Cyber-Attacks. 
\end{IEEEkeywords}
\IEEEpeerreviewmaketitle
\section{Introduction}
The operation of today's power grid largely relies on automated control applications and Supervisory Control and Data Acquisition~(SCADA) systems. While control applications compute the commands to operate the power grid, SCADA serves  as the channel between control applications and field devices~\cite{zhu2011taxonomy} by transmitting measurement and control signals. The desire to improve the efficiency and reliability of control applications and SCADA has led to the use of heterogeneous and non-proprietary ICT~\cite{liu2012intruders}. However, this heterogeneous and non-proprietary ICT increases the number of cyber-vulnerabilities, opening up a much wider scope of cyber-security concerns among utilities.

By exploiting cyber-vulnerabilities, malicious adversaries can launch cyber-attacks against control applications and SCADA, among which \textit{Coordinated Cyber-Attacks~(CCAs)} are considered highly threatening and difficult to defend against. This is because CCAs (i) possess higher disruptiveness by integrating resources from multiple attack entities, and (ii) present heterogeneous traits in cyber-space and the physical grid by hitting multiple targets to achieve the attack goal. Thus, and as opposed to regular (or independent) attacks, whose severity is limited by the power grid's redundancy, CCAs could inflict catastrophic consequences as exemplified by the famous cyber-attacks against the Ukrainian power grid (the ``BlackEnergy'' malware attack in 2015~\cite{lee2016analysis,sun2016coordinated}, and the ``Crash Override'' attack in 2016~\cite{crashoverride}). These massive cyber-attacks could trigger power outages, leaving thousands of consumers and facilities without electricity.


Intrusion Detection Systems~(IDSs) are necessary tools to protect control applications and SCADA against cyber-attacks. IDSs record and analyze cyber-traces from adversaries that breach into the grid's cyber-system to exploit vulnerabilities. If, after analyzing cyber-traces, the security of the grid appears to be compromised, then IDSs will generate alarms. In addition, some IDSs will also take actions to mitigate attacks' effect. While IDSs can detect regular attacks or individual components of CCAs, they suffer from false alarms, fail to identify CCAs, and cannot estimate the attack consequences on the grid. 

To identify CCAs and estimate attack consequences, recent works suggest to integrate intrusion data from IDSs with attack templates --\textit{attack templates} model cyber-attacks against control applications.  This integration results in a set of \textit{Correlation Indices}~(CIs) describing the temporal and/or spatial correlation of coordinated attacks.  
\vspace{-\dist}
\subsection{Related Works} 
Many CIs have been proposed in the literature; however, they differ in their principles, which we summarize below.

\subsubsection{CIs based on adversaries' cyber traces} Attack sequences of the same adversary have similar cyber-traces that can be identified as contributing to CCAs. IDSs use this detection principle to investigate the temporal correlation of intrusions in cyber-space. Anomaly matrices~\cite{ten2011anomaly} and Time Failure Propagation Graphs~\cite{sun2016coordinated} are proposed to relate intrusion time with intrusion actions. While capable of detecting CCAs at the cyber-space, CIs of this type fail to estimate the attack consequences on the grid.

\subsubsection{CIs based on cyber-physical dependence} Logic graphs describing the conditions (in sequence in the cyber-space) for a physical consequence to take place can be used to derive CIs~\cite{wang2015cyber}. The logic graphs can take forms of attack trees \cite{ten2007vulnerability}, attack graphs \cite{liu2010security}, and PetriNets \cite{chen2011petri}. Temporal correlation of attacks is derived not only in the cyber-space but also in the physical power grid (see Fig.~2 in \cite{jie2015risk} for an example). However, constructing these logic graphs requires great computational effort due to the large number of cyber and physical components.

\subsubsection{CIs based on attack goals on the physical grid} Adversaries' goals described with reliability metrics or in terms of the criticality of a certain target are used to derive CIs. For example, in \cite{sun2016coordinated}, substations are attack targets and their criticality is first ranked. In \cite{ten2016cyber}, the attack goal is modeled as causing an insufficient power transfer. The work takes a numerical approach by disconnecting a set of substations at one time and running power flow. The substations in the set are identified as correlated if the power flow is divergent.

Given the great size of power grids, the combined deployment of CIs based on cyber-traces and attack goals promises better computation performance and higher accuracy than the CIs based on cyber-physical dependence.  The existing CIs based on attack goals, however, are limited to a few goals achieved by corrupting control commands. Other cyber-attacks, such as measurement attacks, present much higher threats in coordination (as a rich body of literature has shown their impact in electricity markets and security constrained power flows \cite{choi2013ramp, xie2011integrity, ye2016transmission}). This is because measurement attacks are (i)~difficult to detect by hiding in measurement signals and deceiving through control applications, and (ii)~capable of inflicting disastrous consequences by coordinating attacks against multiple grid components.
\vspace{-\dist}
\subsection{Our Work} 
This paper proposes a method to derive Correlation Indices based on attack goals for the following attack template: measurement attacks against Security Constrained Economic Dispatch~(SCED). In particular, we make the following contributions.
\begin{enumerate}
\item An analytical method to derive CIs. We formulate the attack template as a bilevel mix-integer optimization program. This problem is challenging due to its non-convex and combinatorial nature. To address these challenges, we propose an algorithm that computes the CIs based on attack goals.
\item A collection of set-theoretic properties for the CIs. These properties relate attack goals to the targets of CCAs. 
\item Defense strategies against CCAs, a metric of defense effectiveness, and the application of CIs to identify CCAs.
\end{enumerate}
Though we present our method to derive CIs for SCED, we emphasize that our method can be extended to other static control applications. 

The rest of the paper is organized as follows. Section 2 reviews the concepts of static applications and attack templates. The mathematical models of SCED and the attack template in bilevel form are presented in Sections 3 and 4, respectively. The CIs are derived in Section 5. Section 6 describes the CIs' properties, defense strategies, the metric of defense effectiveness, and the application of CIs to identify CCAs. In Section 7, the CIs are demonstrated with numerical experiments. Finally, Section 8 concludes the paper.
\section{Background}
In this section, we review the concepts of static applications and attack templates.
\vspace{-\dist} 
\subsection{Static Control Applications} 
\textit{Static Applications} are control loops designed to monitor, supervise, and control the grid's operating point --\ie they ignore the dynamics and work with the grid at a quasi-steady state. These applications can be automated or executed by a human operator. Examples include Security Constrained Economic Dispatch~(SCED), Optimal Reactive Power Support, and dispatch in Electricity Markets.

Fig.~\ref{fig:Hierarchical} illustrates a schematic of a static application. These applications compute control commands by solving optimization algorithms or by allowing direct manipulation via a Human Machine Interface. In any case, the control commands are computed based on measurements collected at remote substations.  To verify the integrity of these measurements, most well-known applications implement state estimation and bad data detection. 
\begin{figure}[t]
\centering
\includegraphics[width= .5\textwidth, height = 2.70 in]{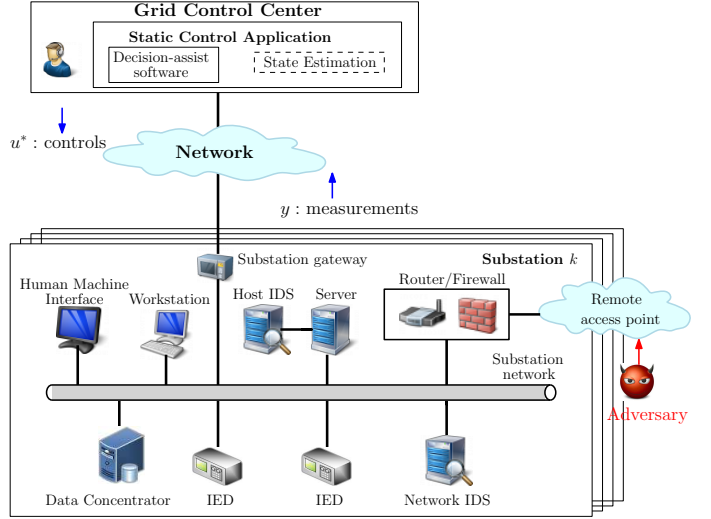}
\caption{Static Control Application. At substation $s_k$, we illustrate its Information and Communication Technology, including Intrusion Detection Systems~(IDSs).} 
\vspace{-\dist}
\label{fig:Hierarchical}
\vspace{-\dist}
\end{figure} 
\vspace{-\dist} 
\subsection{Attack Templates}
\textit{Attack templates} describe models of cyber-attacks on control applications. We consider two basic cyber-attacks: control and measurement attacks. In control attacks, the adversary modifies control commands directly~(Fig.~\ref{fig:ctrl}). In measurement attacks, the adversary modifies control commands indirectly by corrupting measurements (Fig.~\ref{fig:mattk}).  Since remote substations collect the measurements and operate physical devices~(\eg circuit breakers or capacitors), we assume that control and/or measurement attacks are executed by hacking into remote substations.
\begin{figure}[t]
    \centering 
    \subfloat[][Control attack]{\includegraphics[width=0.25\textwidth, height=1.05in]{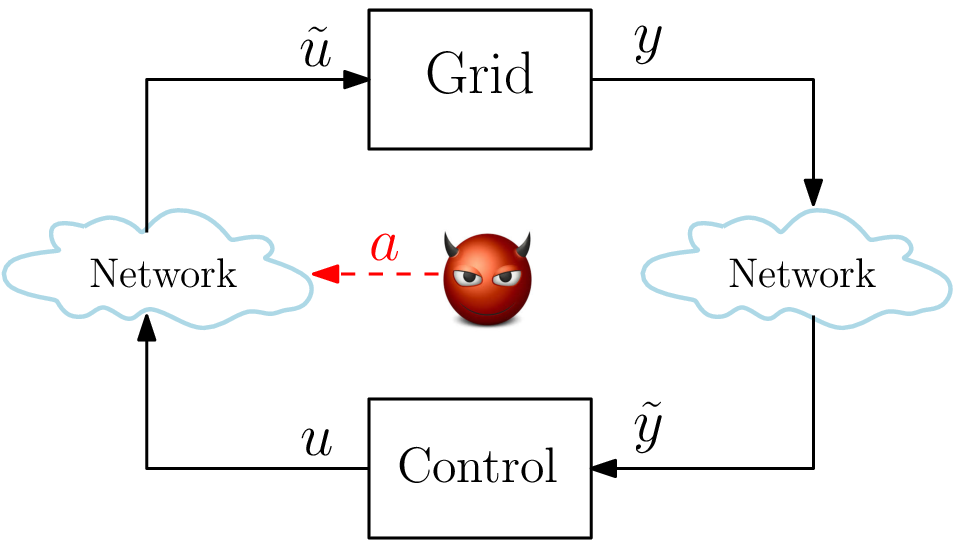}\label{fig:ctrl}}
    ~
    \subfloat[][Measurement attack]{\includegraphics[width=0.25\textwidth, height=1.05in]{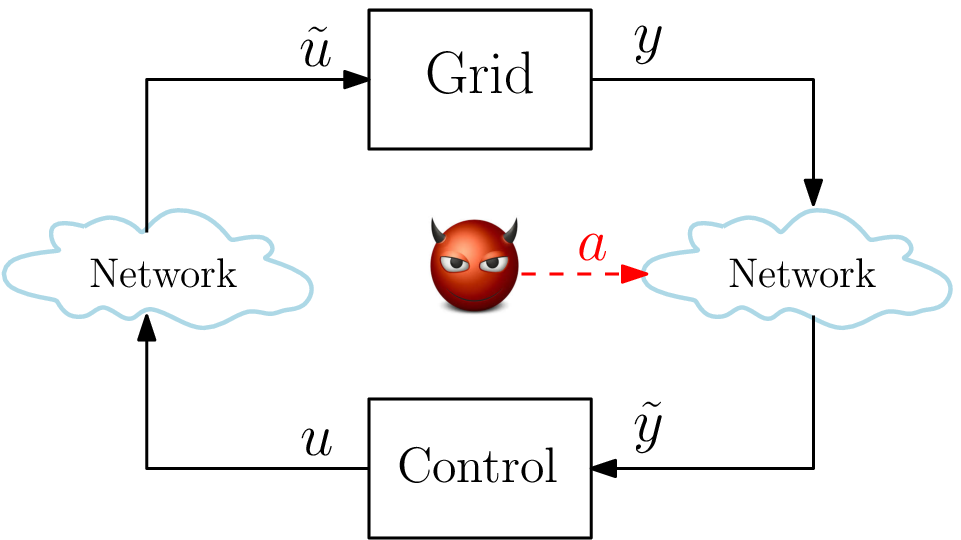}\label{fig:mattk}} 
    \caption{Control and measurement attacks. $u:$ control command, $y:$ measurements. $u \neq \tilde{u}$ ($y \neq \tilde{y}$) during the cyber-attack.}
    \vspace{-\dist}
    \label{fig:attacks}
\end{figure}

Attack templates have been used in the literature to determine the consequences of cyber-attacks, identify critical components of the grid, derive defense strategies, etc. For instance, by studying the attack template of measurement attacks on state estimation, several authors have proposed to stop the attacks by enhancing the screening methods of state estimation~\cite{liangreview}. This defense strategy, however, fails if the static application does not have state estimation, which is often true for real-time and contingency dispatch.

In this paper, we consider the following attack template: measurement attacks against SCED. This attack template describes an adversary with the following characteristics. 
\begin{enumerate}
\item The adversary knows the models of the power grid and SCED. 
\item The adversary can hack into the substations' ICT and inject falsified measurements to manipulate SCED.
\item The adversary can coordinate the attack against multiple substations over a large geographic area --\ie launch Coordinated Cyber-Attacks~(CCAs).
\end{enumerate}
We use the attack template to derive Correlation Indices~(CIs). These CIs describe a relation between the target substations and the attack goal (Fig.~\ref{fig:CIasGraph}).

\begin{remark}
The adversary's characteristics might be restrictive. However, they were selected for convenience of CIs' development and can be relaxed at the expense of more involved computations. For example, to relax the first characteristic, existing studies~\cite{tajer2017false,rahman2012false} developed stochastic methods to launch attacks with limited information. Other studies~\cite{liu2017false,tajer2011distributed} presented methods for estimating the power grid model with region-constrained information from multiple adversaries. The stochastic and estimation methods can easily be applied to extend the CIs' development method in future studies.
\end{remark}

\begin{figure}[t]
\centering
\includegraphics[width= .3\textwidth, height = 1.25 in]{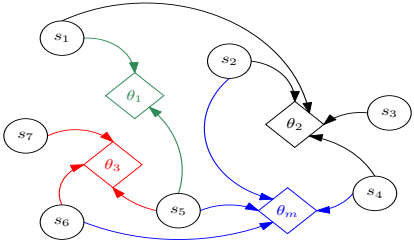}
\caption{Relation graphs between $n_s$ targets~($s_k$) and $m$ attack goals~($\theta_i$). For example, the targets associated to~$\theta_3$ are $\{s_5,s_6,s_{n_s}\}$, and to~$\theta_m$ are $\{s_2,s_4,s_5,s_6\}$}. 
\label{fig:CIasGraph}
\vspace{-\dist}
\end{figure} 
\section{Mathematical Models}
In this section, we describe the models of the power grid and Security Constrained Economic Dispatch~(SCED). 
\subsection{Mathematical Notation}
Throughout this paper, we use the following notation. Let $\R$ and $\R_{\geq 0}$ (resp. $\R_{>0}$) denote the set of real numbers and non-negative (resp. positive) real numbers. For $n>1$, $I_n$ denotes the $n$-dimensional identity matrix. $\mathbf{1}$ and $\mathbf{0}$ denote, respectively, the vectors (or matrices) with all components equal to one and zero. Given a finite set~$V$, we let $|V|$ denote its cardinality, \ie the number of elements of $V$, and $2^{V}$ the power set of $V$, \ie the set of all subsets of $V$. 

For a matrix $A \in \R^{n \times m}$, $[A]_i$ and $[A]_{ij}$ denote its $i$th row and its $(i,j)$th element. Given a vector $x \in \R^n$, $x_i$ denotes the $i$th element, $\diag(x)$ the diagonal matrix of $x$, and $||x||_0$ the zero norm of $x$, \ie the number of non-zero elements of~$x$. We let $||x||_{\infty}$ denote the infinity norm defined as $||x||_{\infty}:= \max\{|x_i|\}$. For two vectors $x,y\in \R^n$, $x \circ y = z \in R^n$ denotes the Hadamard or element-wise product, \ie $z_i = x_i y_i$, and $x \preceq y$ denotes the element-wise inequality, \ie $x_i \leq y_i$.
\vspace{-\dist}
\subsection{Power Grid Modeling} 
We model the power grid as the graph~$G = (V,E)$, where $V$ and $E \subset V \times V$ are the sets of $n := |V|$ buses and $m:=|E|$ transmission lines. To each bus $i \in V$, we associate the generation~$P_{g,i} \in \R_{\geq 0}$, and the demand~$P_{d,i} \in \R_{\geq 0}$; to each transmission line $e:=(i,j) \in E$, connecting buses $i,j \in V$, we associate the power flow $P_{f,e} \in \R$. In vector form, the generation, demand, and power flows are respectively $P_g = [P_{g,1}, \hdots, P_{g,n}]^\top$, $P_d = [P_{d,1}, \hdots, P_{d,n}]^\top$, and $P_f =[P_{f,1}, \hdots, P_{f,m}]^\top$. 

In addition, we assume the grid has a set of $n_s$ substations, \ie~$S  = \{s_1,s_2,\hdots,s_{n_s}\}$. At substation $s_k$, we represent the grid within its service area as the sub-graph $G_{s_k} = (V_{s_k},E_{s_k})$ with the following properties.
\begin{enumerate}
\item Substation service areas compose the entire power grid, \ie $\cup_{s_k \in S} G_{s_k} = G$. 
\item Substation service areas may overlap, \ie for some $s_k,s_l \in S$, we may have $G_{s_k} \cap G_{s_l} \neq \emptyset$, but the overlapped areas do not have buses with generation.
\item Each substation collects demand measurements, denoted as~$\tilde{P}_d \in \R^n$, within its service area. 
\end{enumerate} 
\vspace{-\dist}
\subsection{Security Constrained Economic Dispatch} 
We consider a SCED problem that computes a new generation profile~$P_g^*$ based on demand measurements~$\tilde{P}_d$. 

The SCED problem is formulated based on the power flow equations. The power flow equations are the mathematical model to plan, operate, and analyze the power grid. They describe how generation and demand balance, and how active and reactive power flow through the grid. 

For large-scale power grids, however, the coupled active and reactive power flow model might become computationally expensive and even unfeasible. Thus, a decoupled~(DC) power flow might be the only viable alternative to solve large-scale problems. DC power flow is simpler and more robust due to sparsity and linearity, but it is only accurate close to the operating point~\cite{dorfler2013novel}. We refer the interested reader to~\cite{PJM} and \cite{NE} for more information on how utilities use DC power flow.

We formulate SCED (based on DC power flow) as a convex optimization problem that minimizes the total generation cost~\eqref{eq:SCEDGenCost} subject to the following security constraints: generation-demand balance~\eqref{eq:SCEDPowerBalance}, operation limits of the generators~\eqref{eq:SCEDGensLimits}, and transmission limits on power flows~\eqref{eq:SCEDFlowLimits}, \ie
\begin{subequations} \label{eq:SCED}
\begin{align}
\min_{P_g} \quad & \frac{1}{2} P_g^\top C_2 P_g+ c_1^\top P_g +c_0, \label{eq:SCEDGenCost} \\
\text{s.t} \quad & \mathbf{1}^\top P_{g} - \mathbf{1}^\top \tilde{P}_{d} = 0, \label{eq:SCEDPowerBalance} \\
&  P_g \in [\mathbf{0} ,\bar{P}_g], \label{eq:SCEDGensLimits} \\
&  \underbrace{F (P_g - \tilde{P}_d)}_{=:P_f} \in [-\bar{P}_f,\bar{P}_f], \label{eq:SCEDFlowLimits}
\end{align}
\end{subequations}
where $c_2, c_1, c_0 \in \R_{\geq 0}^{n}$ are the cost coefficients for generation, $C_2 = \diag(c_2)$, $\bar{P}_g \in \R_{\geq 0}^n$ is the rated power from generators, $\bar{P}_f \in \R^m_{\geq 0}$ is the thermal capacity of transmission lines, and $F$ is the generator shift matrix.
\vspace{-\dist}
\section{Attack Template}
In this section, we describe the attack template in bilevel form. The attack template models measurement attacks against SCED. We also describe the attack goal and constraints.
\vspace{-\dist}
\subsection{Measurement Attacks}
Let $a \in \R^n$ denote the attack signal. The adversary fabricates $a$ to corrupt measurements of the demand as follows
\begin{align} \label{eq:measA}
\tilde{P}_d(a) = P_d+a.
\end{align}
We assume the adversary injects $a$ by hacking into substations and altering measurements at the data concentrator (or at a communication link via a man-in-the-middle attack). Thus, in the rest of the paper, we refer the target data concentrator and ICT within the substation as the target substation.
\vspace{-\dist}
\subsection{The Attack Goal}
Using the corrupted measurements~\eqref{eq:measA}, the adversary has the following attack goal: to manipulate SCED and increase the power flow on a single target line~$e \in E$, which occurs at
\begin{align} \label{eq:AdversaryGoal}
|P_{f,e}(a)| = |[F]_e(P_g^*(a)-P_d)| \geq (1+\tau)|P_{f,e}(0)|, 
\end{align} 
where $P_{f,e}(a) \in \R$ (resp. $P_{f,e}(0) \in \R$) denotes the power flow on $e$ after (resp. before) the attack, $P_g^*(a) \in \R^n$ denotes the new (after the attack) generation profile, and $\tau \in (0, \bar{\tau}] \subseteq \R_{>0} $ quantifies the \textit{flow increase}. 

We use the notation $\eT \in E \times (0, \bar{\tau}]$ to describe attack goals satisfying~\eqref{eq:AdversaryGoal}. Since $\tau \in (0, \bar{\tau}]$, we can have (in theory) an infinite number of attack goals. In practice, however, we study a finite number of attack goals~$\tau$. For example, the attack goal $\tau$ that will cause congestion (relating to economic loss), overloading (increasing long-term capital cost by accelerating asset depreciation, increasing losses), and loss of transmission lines (under very stressful operating condition). Thus, in the worst case scenario, we assume the adversary maximizes the flow increase~$\tau$.
\vspace{-\dist}
\subsection{Attack Constraints}
The attack might be constrained due to the following.
\begin{enumerate}
\item State estimation and bad data detection.
\item Corruptible measurements and defense at substations.
\item Attack resources.
\end{enumerate}

Since SCED has state estimation, the adversary must design the attack signal $a$ to bypass bad data detection. Other applications, however, might not have state estimation, and hence the attack signal $a$ can take any (realistic) value. In any case, we write this constraint as  $||a||_{\infty} \leq \bar{a}$ where $\bar{a} > 0$. We can use $\bar{a}$ as a design parameter to model different attack scenarios.

If the defender protects substation~$s_k \in S$, then the adversary cannot corrupt measurements at $s_k$; otherwise, the adversary can corrupt all the measurements. We write this constraint as
\begin{align} \label{eq:AttackRegion}
a_i \in \delta_{s_k}[-\bar{a},\bar{a}], \ \forall i \in V_{s_k}, \forall s_k \in S,~\delta_{s_k} \in \{0,1\}, 
\end{align}
where $\delta_{s_k} = 1$ if the adversary attacks~$s_k$, and $\delta_{s_k} = 0$ if not. The vector $\delta(e, \tau) = [\delta_{s_1},\delta_{s_2},\hdots,\delta_{s_{n_s}}]^\top$ describes safe and target substations during CCAs with an attack goal $(e,\tau)$.

If the adversary has limited resources, then (s)he must limit the number of target substations. We write this constraint as
\begin{align} \label{eq:MinAttackedSubs}
||\delta (e,\tau)||_0 \leq \kappa,
\end{align} 
where $\kappa \in  \{1,2,\hdots,n_s\}$ denotes the maximum number of target substations. In the worst case scenario, the adversary minimizes $\kappa$.

\begin{remark}
Note that in the worst case scenario the adversary faces two conflicting objectives: maximize $\tau$ and minimize $\kappa$. The interaction $\tau - \kappa$ generates a Pareto-like behavior between aimed flow increase~($\tau$) and the number of target substations~($\kappa$).
\end{remark}
 \vspace{-\dist}
\subsection{Attack Template in Bilevel Form}
We use bilevel optimization to model the attack template, describing the worst case scenario of measurement attacks against SCED. Since \textit{bilevel optimization} models decision making among agents~\cite{luo1996mathematical} (\eg adversary vs defender), researchers have used it to study cyber-attacks~\cite{yuan2011modeling,liang2016vulnerability}); or physical attacks~\cite{arroyo2005solution} to power grids.

We write the attack template in bilevel form as follows:
\begin{align}
\max_{\tau,\kappa,\delta,a} \quad & \tau-\kappa,  \nonumber \\
\text{s.t.} \quad & \text{Eqs. } \eqref{eq:AdversaryGoal} - \eqref{eq:MinAttackedSubs}, \label{eq:AttackLevel}
\end{align}
where $P_g^*(a)$ denotes the optimal solution of the SCED optimization algorithm, parametrized by the attack signal~$a$, \ie
\begin{align}
P_g^*(a) \in \text{arg}\min_{P_g} \quad & \frac{1}{2}P_g^\top C_2 P_g + c_1^\top P_g + c_0,  \nonumber \\
\text{s.t} \quad & \mathbf{1}^\top P_g-\mathbf{1}^\top (P_d+a) = 0, \label{eq:SCEDAttackComp} \\
&  A_0 P_g + A_1a - b \preceq 0. \nonumber
\end{align}
with
\begin{align*}
A_0 := \begin{bmatrix}
-I_{n} \\
I_{n} \\
F \\
-F 
\end{bmatrix}, \quad A_1 = \begin{bmatrix}
\mathbf{0} \\
\mathbf{0} \\
-F \\
F
\end{bmatrix}, \quad b = \begin{bmatrix}
\mathbf{0} \\
\bar{P}_g \\
\bar{P}_f \\
-\bar{P}_f
\end{bmatrix} - A_1 P_d.
\end{align*}

In the above, the upper level problem~\eqref{eq:AttackLevel} models the attack goal and constraints, while the lower level problem~\eqref{eq:SCEDAttackComp} models the SCED manipulated through corrupted measurements~($a$).

The optimal solution of the bilevel form~$(\tau^*,\kappa^*,\delta^*,a^*,P_g^*)$, if it exists, describes an adversary that targets the least number of substations ($\kappa^*$ and $\delta^*$) and maximizes the flow increase~($\tau^*$) on the single line~$e \in E$. 

The bilevel form~\eqref{eq:AttackLevel}-\eqref{eq:SCEDAttackComp} depends on several parameters, including the power grid parameters, the SCED parameters, and the maximum value for the attack signal~$\bar{a}$. Thus, a defender, using the attack template, can select the parameters to study different scenarios.

\begin{remark}
By defining the corresponding attack goal, constraints, and control algorithm, we can model measurement attacks against other static applications, using the attack template in bilevel form. In addition, we can model control attacks using the upper level problem~\eqref{eq:AttackLevel}.  
\end{remark}
\vspace{-\dist}

\section{Deriving the Correlation Indices}
In this section, we derive the key concepts, Correlation Indices~(CIs) and security index. We obtain the indices by transforming the attack template in bilevel form into a Mathematical Program with Equilibrium Constraints~(MPEC) and addressing its mathematical challenges.
\vspace{-\dist}
\subsection{Attack Template in Mathematical Programming Form} 
Since the lower level problem~\eqref{eq:SCEDAttackComp} is strictly convex on $P_g$ for a fixed $a$, its Karush-Kuhn-Tucker~(KKT) conditions are necessary and sufficient for optimality~\cite{boyd2004convex}. So, we can write the bilevel form \eqref{eq:AttackLevel}-\eqref{eq:SCEDAttackComp} as a Mathematical Program with Equilibrium Constraints~(MPEC) (\ie a single-level optimization problem~\cite{dempe2002foundations}), by replacing~\eqref{eq:SCEDAttackComp} with the KKT conditions. This yields
\begin{subequations} \label{eq:KKTform}
\begin{align}
\max_{\tau, \kappa, \delta, a, P_g^*} \quad & \tau-\kappa ,   \\
\text{s.t.} \quad & \text{Eqs. } \eqref{eq:AdversaryGoal} - \eqref{eq:MinAttackedSubs}, \\
&\mathbf{1}^\top P_g^*-\mathbf{1}^\top( a + P_d) = 0, \label{eq:KKTformbalance} \\
&C_2 P_g^*+ c_1 - \mathbf{1} \nu^* +A_0^\top \lambda^*=0, \label{eq:KKTformgradient} \\
&A_0 P_g^* + A_1 a -b \preceq 0, \label{eq:KKTformineq} \\
&\lambda^* \succeq 0, \label{eq:KKTformineqdual} \\
&\lambda^* \circ (A_0 P_g^* + A_1 a -b)= 0. \label{eq:KKTformcompslackness}
\end{align}
\end{subequations}
In the above, \eqref{eq:KKTformbalance}-\eqref{eq:KKTformcompslackness} are the KKT conditions of \eqref{eq:SCEDAttackComp} and $\nu^*$ (resp. $\lambda^*$) denotes the Lagrange multiplier of the equality (resp. inequality) constraint of~\eqref{eq:SCEDAttackComp}.
\vspace{-\dist}
\subsection{Mathematical Challenges} 
The MPEC~\eqref{eq:KKTform} is a challenging problem. Its properties are far more complex than the properties of traditional mathematical programming problems, making the standard nonlinear programming approach inapplicable~\cite{dempe2002foundations}. These challenges arise because the MPEC~\eqref{eq:KKTform} is nonconvex, is non-differentiable, and has two conflicting objectives.

The complementary slackness constraint~\eqref{eq:KKTformcompslackness} makes the MPEC~\eqref{eq:KKTform} nonconvex.  To address this challenge, we linearize \eqref{eq:KKTformcompslackness} using the Big M method~\cite{dempe2002foundations}. Let $M>0$ be a sufficiently large constant, then \eqref{eq:KKTformcompslackness} is equivalent to 
\begin{align}
\lambda^* &\preceq M(\mathbf{1}-\omega), \quad -(A_0P^*_g + A_1 a-b)\preceq M \omega, \label{eq:Linearcompslackness}
\end{align}
where $\omega \in \{0,1\}^{2(n+m)}$  is a binary decision variable.

The attack goal constraint~\eqref{eq:AdversaryGoal} makes the MPEC~\eqref{eq:KKTform} non-differentiable. To address this challenge, we proceed as follows. Since the flow on $e$ before the attack~$P_{f,e}(0)$ can be computed using~\eqref{eq:SCED}, the attack goal constraint \eqref{eq:AdversaryGoal} can be written as
\begin{align} \label{eq:EquivChangeFlow}
\begin{cases}
[F]_{e}(P_g^*(a)-P_d)\geq (1+\tau)\ P_{f,e}(0), \ \text{if } P_{f,e}(0) \geq 0, \\
[F]_{e}(P_g^*(a)-P_d)\leq (1+\tau)\ P_{f,e}(0), \ \text{if } P_{f,e}(0) < 0.
\end{cases}
\end{align}

The MPEC~\eqref{eq:KKTform} has two conflicting objectives, \ie $\max~\tau - \kappa$. To address this challenge, we minimize $\kappa$ (\ie the number of target substations) and let $\tau \geq \tilde{\tau}$ where $\tilde{\tau}$ is a predefined flow increase. We can attach semantics to $\tilde{\tau}$, \eg the~$(\tilde{\tau})$ that triggers the line's protection.

The proposed solutions for the challenges transform the MPEC~\eqref{eq:KKTform} into the following mixed-integer linear programming problem
\begin{align} \label{eq:MinCard}
\min_{\tau, \kappa,\delta,a,P_g^*} \quad & \kappa, \nonumber \\
\text{s.t.} \quad  &\tau \geq \tilde{\tau},  \\
&\text{Eqs. } \eqref{eq:AttackRegion},\eqref{eq:MinAttackedSubs},\text{\eqref{eq:KKTformbalance}-\eqref{eq:KKTformineqdual}, \eqref{eq:Linearcompslackness}}\text{ and } \eqref{eq:EquivChangeFlow}. \nonumber
\end{align}
\vspace{-\dist}
\subsection{Algorithm: Deriving the CIs} 
The optimal solutions $\kappa^*$ and $\delta^*\eT$ of \eqref{eq:MinCard} denote, respectively, the security index and the Correlation Index~(CI) for the attack goal~$\eT$. The \textit{security index}~$\kappa^*$ determines the least number of target substations to increase the flow~$(\tilde{\tau})$ on line $e$, while the \textit{Correlation Index}~$\delta^* \eT$ describes which target substations. This CI represents a strongly correlated CCA since it relates the least number of target substations with the attack goal $\eT$.

Though the security index $\kappa^*$ is unique, the CI might not be. Other CCAs attacking $\kappa^*$ substations might also increase the flow~$(\tilde{\tau})$ on line $e$ --this is a consequence of the combinatorial nature of \eqref{eq:MinCard}. All the CIs, however, are feasible solutions of \eqref{eq:MinCard} with $\kappa = \kappa^*$, which we use to develop the following algorithm.

Given the attack goal~$\eT$, Algorithm~\ref{alg:ImpactiveAttacks} computes the security index first, and then the CIs by exploring which of the ${\kappa^*}\choose{n_s}$ combinations of target substations are feasible solutions of \eqref{eq:MinCard} with $\kappa = \kappa^*$.

\begin{algorithm}[t]
\caption{Deriving the Security Index and CIs}\label{alg:ImpactiveAttacks}
\begin{algorithmic}[1]
\State $(\kappa^*,\text{CIs}) \gets$ CorrelationIndices($e,\tilde{\tau}$)
\Procedure{CorrelationIndices}{$e,\tilde{\tau}$}  
\State $\text{CIs} \gets \{\emptyset\}$ 
\State Compute $\kappa^*$ by solving~$\eqref{eq:MinCard}$
\For{$j = 1$ to ${\kappa^*}\choose{n_s}$}
\If{$\delta(j)$ is \textit{feasible} (of~\eqref{eq:MinCard} with $\kappa = \kappa^*$)} 
\State $\text{CIs} = \{\text{CIs},\delta(j)\}$
\EndIf
\EndFor
\State \textbf{return} $(\kappa^*,\text{CIs})$
\EndProcedure
\end{algorithmic}
\end{algorithm}
\vspace{-\dist}
\subsection{Limitations} Our method has a limitation, namely the computation performance of Algorithm~\ref{alg:ImpactiveAttacks}, which we discuss next.

Algorithm~\ref{alg:ImpactiveAttacks} only promises local optimal solutions in finite time. This is because the mixed-integer linear problem~\eqref{eq:MinCard} and the ${{\kappa^*}\choose{n_s}} - 1$ feasibility problems are in general NP-hard. Given that there is only a few substations in a power grid, the computation time of the proposed algorithm is unlikely to be a problem. However, in case of abrupt changes occurring in the power grid, CIs will need to be updated at run-time and an algorithm providing theoretic bounds of convergence must be sought after. These tasks are out of the scope of this paper, but they will be part of our future work.  
\vspace{-\dist}
\section{Applying the CIs to Protect against CCAs}
In this section, we describe properties of the Correlation Indices~(CIs) using a set theoretic approach. These properties allows us to derive defense implications and applications in IDS.
\vspace{-\dist}
\subsection{CIs' Properties}
Let $\Sa \in 2^S$ describe the set of target substations during a CCA. If the CCA is effective, \ie if the CCA increases the flow~($\tilde{\tau}$) on line $e \in E$, we use the notation $\Sa \to \eT$; otherwise we use $\Sa \not \to \eT$. We collect all effective CCAs in the set \[\CSa := \{\Sa~|~\Sa \to (e,\tilde{\tau}) \} \subset 2^S.\] The next proposition shows that if the CCA~$\Sp$ fails to increase the flow~($\tilde{\tau}$) on line $e$, then all subordinated attacks~$\Sa \subset \Sp$ also fail to increase the flow on $e$. 
\begin{proposition} \label{prop:2}
(\textit{Subordinated CCAs.}) If $\Sp \not \in \CSa$, then $\Sa \not \in \CSa$ for any $\Sa \subset \Sp$.
\end{proposition}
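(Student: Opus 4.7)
The plan is to prove the proposition via its contrapositive: I will show that whenever $\Sa \subseteq \Sp$, the implication $\Sa \in \CSa \Rightarrow \Sp \in \CSa$ holds. Equivalently, I establish that the family $\CSa$ is monotone nondecreasing with respect to set inclusion on the target set, so that no subset of an ineffective CCA can itself be effective.

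First I would unpack $\Sa \in \CSa$ using the attack template of Section~4: there exists an attack signal $a$ with $\|a\|_\infty \leq \bar{a}$ and support restricted by~\eqref{eq:AttackRegion} to $\cup_{s_k \in \Sa} V_{s_k}$, such that the SCED response $P_g^*(a)$ coming from the lower-level problem~\eqref{eq:SCEDAttackComp} satisfies the attack-goal inequality~\eqref{eq:AdversaryGoal} on line $e$ with flow increase $\tilde{\tau}$. The key step is then the observation that enlarging the target set from $\Sa$ to $\Sp \supseteq \Sa$ only flips additional coordinates of $\delta$ from $0$ to $1$, which relaxes the per-bus box constraint~\eqref{eq:AttackRegion} and tightens nothing else in the bilevel formulation~\eqref{eq:AttackLevel}--\eqref{eq:SCEDAttackComp}. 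Consequently the very same $a$ (yielding the very same $P_g^*(a)$) remains admissible under the target set $\Sp$ and still witnesses the attack goal; hence $\Sp \to \eT$, and taking the contrapositive gives exactly the claim.

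The only place where some care is needed — more a bookkeeping matter than a real obstacle — is that service areas may overlap (property~2 of Section~3.2). This does not break the monotonicity argument, because~\eqref{eq:AttackRegion} ultimately depends only on whether a given bus lies in \emph{some} attacked service area, and set inclusion is preserved under unions. Likewise, the lower-level SCED is parametrized by the signal $a$ alone, so relabelling which substation delivers which demand measurement has no effect on $P_g^*(a)$ and the witness carries over verbatim from $\Sa$ to any superset $\Sp$.
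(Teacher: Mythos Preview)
Your proposal is correct and follows essentially the same route as the paper's proof: both argue the contrapositive, deducing that if $\Sa$ is effective then so is any superset $\Sp \supseteq \Sa$, and then specialise to the given $\Sp$. Your version is in fact more complete, since you explicitly justify the monotonicity step---showing that the same witnessing attack signal $a$ remains feasible under the enlarged target set because constraint~\eqref{eq:AttackRegion} is only relaxed---whereas the paper's proof simply asserts that ``any super-set $S^*$ of $\Sa$ is effective'' without spelling out why.
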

\begin{proof}
Appendix.
\end{proof} 
\vspace{-\dist}
\begin{definition}
Let $\delta^* \eT$ denote a feasible solution of Algorithm \ref{alg:ImpactiveAttacks} (\ie $||\delta^* \eT ||_0 = \kappa^*$). A CI, denoted as $\CI$, is a strongly correlated CCA that extracts target substations from $\delta^* \eT$ as follows
\begin{equation} \label{eq:CI}
\CI :=\{s_k \in S~|~\delta_k^* \eT \neq 0 \},
\end{equation}
and reaches the goal $\eT$, \ie $\CI \in  \CSa$.
\end{definition}
A CI has \textit{minimal cardinality}, \ie the CI is effective by attacking the least number of substations, which we state next. 
\begin{proposition} \label{prop:1}
(\textit{Minimal Cardinality.}) Let $\CI \in \CCI$ be a CI. Then $\Sp \not \to \eT$ for any $\Sp \in 2^S$ satisfying $|\Sp| < \kappa^*$.
\end{proposition}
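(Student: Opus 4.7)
The plan is to argue by contradiction, leveraging the fact that $\kappa^*$ is defined as the \emph{optimal} (minimum) value of the mixed-integer program \eqref{eq:MinCard}. Assume toward contradiction that there exists some $\Sp \in 2^S$ with $|\Sp| < \kappa^*$ and $\Sp \to \eT$. By the definition of $\Sp \to \eT$, there exists an attack signal $a$ whose nonzero components lie only in buses served by substations in $\Sp$, such that the resulting corrupted dispatch $P_g^*(a)$ satisfies the attack-goal condition \eqref{eq:AdversaryGoal} for the specified~$\tilde{\tau}$.

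Next I would translate this attack witness into a feasible point of \eqref{eq:MinCard}. Concretely, set $\delta_{s_k}=1$ for $s_k \in \Sp$ and $0$ otherwise, pick $\tau = \tilde{\tau}$, take $a$ as above, and let $P_g^*$ together with the KKT multipliers $\nu^*,\lambda^*$ be the unique solution of \eqref{eq:SCEDAttackComp} (unique since \eqref{eq:SCEDAttackComp} is strictly convex in $P_g$ for fixed $a$). By construction $\delta$ satisfies the region constraint \eqref{eq:AttackRegion}; $\|a\|_\infty\le\bar{a}$ holds because the attack $\Sp\to\eT$ is by hypothesis admissible; and $\|\delta\|_0 = |\Sp| \le \kappa$ for $\kappa=|\Sp|$. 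The KKT block \eqref{eq:KKTformbalance}--\eqref{eq:KKTformineqdual} together with the linearized complementarity \eqref{eq:Linearcompslackness} is then satisfied by the KKT point of \eqref{eq:SCEDAttackComp} (choosing $\omega$ component-wise according to which primal inequality is active), and \eqref{eq:EquivChangeFlow} follows from \eqref{eq:AdversaryGoal}. Therefore the tuple $(\tilde{\tau},|\Sp|,\delta,a,P_g^*)$ is feasible for \eqref{eq:MinCard} with objective value $|\Sp| < \kappa^*$, contradicting the optimality of $\kappa^*$. Hence no such $\Sp$ exists.

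The main obstacle I anticipate is not the logical skeleton, which is a textbook ``optimal value cannot be beaten'' argument, but making the translation step airtight: one must verify that an attack satisfying the informal relation $\Sp \to \eT$ truly yields a point that meets \emph{every} constraint of \eqref{eq:MinCard}, including the linearized complementarity variable $\omega$ and the case split in \eqref{eq:EquivChangeFlow}. Since \eqref{eq:SCEDAttackComp} is strictly convex, its KKT conditions are both necessary and sufficient (as already noted before \eqref{eq:KKTform}), so the required multipliers and binary $\omega$ exist canonically, and the argument closes cleanly.
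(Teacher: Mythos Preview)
Your proof is correct and follows essentially the same route as the paper's: both argue that any $\Sp$ with $|\Sp|<\kappa^*$ reaching the goal $\eT$ would contradict the minimality of $\kappa^*$ as the optimal value of~\eqref{eq:MinCard}. The paper's proof is a one-line appeal to ``CIs are minimum cardinality CCAs,'' whereas you unpack this by explicitly constructing the feasible tuple $(\tilde{\tau},|\Sp|,\delta,a,P_g^*)$ and checking each constraint of~\eqref{eq:MinCard}; your version is more careful (and in fact cleaner, since the paper's partition $\CI=\Sp\cup D$ unnecessarily restricts to $\Sp\subset\CI$), but the underlying idea is identical.
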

\begin{proof}
Appendix.
\end{proof}
Note that Proposition~\ref{prop:2} and \ref{prop:1} guarantee security against subordinated CCAs of the CI~$\CI$.

CIs are not unique since there might be another CCA~$\Sp$ satisfying $|\Sp| = \kappa^*$ and $\Sp \in \CSa$. We collect all CIs in the set \[\CCI:= \{ \CI~|~\CI \txt{ is a CI} \} \subseteq \CSa.\]

The following lemma states that any CCA containing the CI~$\CI \in \CCI$ can increase the flow on line~$e \in E$. 
\begin{lemma} \label{lemma:1}
Let $\CI \in \CCI$ denote a CI. Suppose there exists another set of target substations $\Sp \in 2^S$, then $(\CI \cup \Sp) \in \CSa$.
\end{lemma}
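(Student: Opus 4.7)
The plan is to prove the lemma as a straightforward consequence -- in fact, the contrapositive -- of Proposition~\ref{prop:2}. Being a CI, $\CI$ satisfies $\CI \in \CCI \subseteq \CSa$, so it already reaches the goal $\eT$. The only real content of the lemma is then that enlarging the set of target substations preserves effectiveness, i.e.\ membership in $\CSa$ is monotone under set inclusion.

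First I would observe that trivially $\CI \subseteq \CI \cup \Sp$ for any $\Sp \in 2^S$. In the degenerate subcase $\Sp \subseteq \CI$ we have $\CI \cup \Sp = \CI$, and the claim follows immediately from $\CI \in \CSa$. In the generic subcase $\CI$ is a proper subset of $\CI \cup \Sp$, and I would argue by contradiction: assume $(\CI \cup \Sp) \not\in \CSa$, and apply Proposition~\ref{prop:2} to the ineffective set $\CI \cup \Sp$. The proposition then forces every subset of $\CI \cup \Sp$ to be ineffective; in particular $\CI \not\in \CSa$, contradicting $\CI \in \CCI$. Hence $(\CI \cup \Sp) \in \CSa$, which is what we needed.

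As a sanity check I would also give the direct constructive argument at the level of the MPEC~\eqref{eq:KKTform}. Since $\CI$ achieves $\eT$, there exist an attack signal $a^*$ and an indicator $\delta^* \eT$ with support $\CI$ that are feasible for~\eqref{eq:MinCard} with $\tau \geq \tilde{\tau}$. Replacing $\delta^* \eT$ by the indicator of $\CI \cup \Sp$ keeps the attack-region constraint~\eqref{eq:AttackRegion} satisfied -- the adversary simply does not exercise the extra freedom on $\Sp \setminus \CI$, leaving the corresponding entries of $a^*$ at zero -- and it drives the lower-level SCED to the same optimal $P_g^*$, so \eqref{eq:EquivChangeFlow} still holds. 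The main ``obstacle'' is therefore only bookkeeping: once the monotonicity of the adversary's feasible region in the substation set is recognized, no further technical work is needed, and the statement reduces to a one-line corollary of Proposition~\ref{prop:2}.
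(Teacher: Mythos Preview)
Your proof is correct and follows essentially the same route as the paper: both reduce to Proposition~\ref{prop:2} via contradiction, using that $\CI \subseteq \CI \cup \Sp$ forces $\CI$ to be ineffective whenever the union is. The paper splits into three cases (whether $\CI$ is a superset, subset, or neither of $\Sp$) rather than your two, and omits the constructive sanity check, but the core argument is identical.
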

\begin{proof}
Appendix.
\end{proof}
Lemma~\ref{lemma:1} implies that CCAs might be targeting multiple lines, which we generalize in the next theorem.
\begin{theorem} \label{theorem:1}
(\textit{Multiple Targets})~Let $J$ be an arbitrary index set, and let $\CI \in  \mathcal{S}^*(e_j, \tilde{\tau}_j)$ be CIs (for different lines) for all $j \in J$. Suppose there exists a CCA targeting substations $S^* \in 2^S$ such that $\CI \subset S^*$ for all $j \in J$. Then the CCA $S^*$ can increase the flow~($\tilde{\tau}_j$) on any line from the set $\{e_j\}_{j \in J}$, \ie $S^* \in \mathcal{S}(e_j,\tilde{\tau}_j) $ for all $j \in J$.
\end{theorem}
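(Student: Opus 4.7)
The plan is to reduce the theorem to Lemma~\ref{lemma:1} applied one index at a time. The hypothesis supplies, for each $j \in J$, a CI $S^*_{\alpha,j} \in \mathcal{S}^*_\alpha(e_j, \tilde{\tau}_j)$ that sits inside the common target set $S^*$; so the natural approach is to peel off each index, invoke Lemma~\ref{lemma:1} once, and then conjoin the resulting memberships across all $j \in J$.

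Concretely, I will first fix an arbitrary $j \in J$ and isolate the corresponding CI $S^*_{\alpha,j}$. Second, using the containment $S^*_{\alpha,j} \subset S^*$ supplied by the hypothesis, I will decompose the common attack set as $S^* = S^*_{\alpha,j} \cup (S^* \setminus S^*_{\alpha,j})$. Third, I will set $S'_{\alpha,j} := S^* \setminus S^*_{\alpha,j} \in 2^S$ and apply Lemma~\ref{lemma:1} with this choice of $\Sp$, which yields $S^* = S^*_{\alpha,j} \cup S'_{\alpha,j} \in \mathcal{S}_\alpha(e_j, \tilde{\tau}_j)$. Since $j \in J$ was arbitrary, this conclusion holds for every $j \in J$, which is exactly the claim of the theorem.

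I do not anticipate a substantive obstacle, because all the nontrivial content, namely that augmenting a CI with additional target substations preserves the ability to induce the prescribed flow increase on the target line, is already packaged inside Lemma~\ref{lemma:1}. The only minor subtlety is the degenerate case $S^* = S^*_{\alpha,j}$, in which $S'_{\alpha,j} = \emptyset$; even then the argument goes through, since $\emptyset \in 2^S$ is an admissible choice in Lemma~\ref{lemma:1} and $\mathcal{S}^*_\alpha(e_j, \tilde{\tau}_j) \subseteq \mathcal{S}_\alpha(e_j, \tilde{\tau}_j)$ by the definition of a CI. The remainder is bookkeeping over the index set $J$, and the argument goes through for any cardinality of $J$ (finite or otherwise), since each index is handled independently.
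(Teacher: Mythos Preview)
Your proposal is correct and follows essentially the same approach as the paper: for each $j \in J$, decompose $S^* = S^*_{\alpha,j} \cup (S^* \setminus S^*_{\alpha,j})$ and invoke Lemma~\ref{lemma:1} to conclude $S^* \in \mathcal{S}_\alpha(e_j,\tilde{\tau}_j)$. Your treatment is in fact slightly more careful than the paper's, since you explicitly address the degenerate case $S'_{\alpha,j} = \emptyset$ and note that the argument is index-wise and hence indifferent to the cardinality of $J$.
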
 
\begin{proof}
Appendix.
\end{proof}

The CIs' properties described allow us to study what happens if we protect the measurements on a substation, which we state in the next theorem.
\begin{theorem} \label{theorem:2}
(\textit{Defense at a Single Substation})~Let $J$ be an arbitrary index set, and let $\CI \in  \mathcal{S}^*(e_j, \tilde{\tau}_j)$ be CIs (targeting different lines) for all $j \in J$. Moreover, suppose the collection of target substations $\{\CI\}_{j \in J}$ satisfies $\cap_{j \in J} \CI \neq \emptyset$. If the grid's defender protects measurements at substation $s_k^* \in \cap_{j \in J} \CI$, then one of the following occurs (for all $j \in J$):
\begin{enumerate}
\item[(i)] If $\CI$ is the unique CI that increases the flow~($\tilde{\tau}_j$) on line $e_j$, then after $s_k^*$ is protected, the new CI $S^*_{\beta,j}$ (not necessarily unique) will require targeting more substations, \ie $\kappa^*_{\beta}:= |S^*_{\beta,j}| > |\CI|$.
\item[(ii)] If $\CI$ is not the only CI, then, after $s_k^*$ is protected, the following might occur: (a) if $s_k^*$ is common to all CIs $\CI \in \CCI$, then the conclusion in (i) applies; or (b) if $\{s_k^*\} \cap \CI = \emptyset$ for some $\CI \in \CCI$, then the new collection of CIs, denoted as $\mathcal{S}_\beta^*(e_j,\tilde{\tau}_j)$, satisfies $\mathcal{S}^*_\beta(e_j,\tilde{\tau}_j)  \subset \mathcal{S}^*_\alpha(e_j,\tilde{\tau}_j)$.
\item[(iii)] The attack is infeasible, \ie $\mathcal{S}^*_\beta(e_j,\tilde{\tau}_j) \equiv \{\emptyset \}$.
\end{enumerate}
\end{theorem}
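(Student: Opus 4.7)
The plan is to view defense at substation $s_k^*$ as imposing the additional constraint $\delta_{s_k^*}=0$ in the mixed-integer program \eqref{eq:MinCard}, so that the post-defense problem is a restriction of the pre-defense problem. From this, monotonicity yields $\kappa_\beta^* \geq \kappa_\alpha^*$, and every feasible CCA after defense must have been feasible before defense (and must not contain $s_k^*$). With this observation in place, I would split into the three cases listed in the statement.

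\textbf{Case (i): $\CI$ is the unique CI for $(e_j,\tilde{\tau}_j)$.} Since $s_k^*\in\CI$ by the hypothesis $s_k^*\in\cap_{j\in J}\CI$, the unique CI becomes infeasible once $\delta_{s_k^*}=0$ is enforced. By Proposition~\ref{prop:1}, no set of target substations of cardinality strictly less than $\kappa^*_\alpha=|\CI|$ can achieve the goal, while by uniqueness every other set of cardinality exactly $\kappa_\alpha^*$ fails. Hence, assuming a post-defense CI $\CIb$ exists, it must satisfy $|\CIb|>|\CI|$; if no such CI exists, we fall into case (iii).

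\textbf{Case (ii): $\CI$ is not unique.} In subcase (a), $s_k^* \in \bigcap_{\CI \in \CCI}\CI$, so enforcing $\delta_{s_k^*}=0$ invalidates every original CI, and the argument of case (i) applies verbatim, leading again to $|\CIb|>\kappa^*_\alpha$ (or to case (iii) if infeasibility follows). In subcase (b), there exists $\CI \in \CCI$ with $\{s_k^*\}\cap \CI=\emptyset$; this CI remains feasible in the restricted problem, so the post-defense security index is still $\kappa_\beta^*=\kappa_\alpha^*$. Any post-defense CI must therefore be an original CI that does not contain $s_k^*$, giving $\CCIb \subseteq \CCI$. Since at least one original CI containing $s_k^*$ (namely one that witnessed the intersection hypothesis) is excluded, the inclusion is strict: $\CCIb \subset \CCI$.

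\textbf{Case (iii):} If in cases (i) or (ii)(a) no enlarged-cardinality CCA can bypass $s_k^*$ to achieve $(e_j,\tilde{\tau}_j)$, then the restricted MILP \eqref{eq:MinCard} is infeasible, so $\CCIb\equiv\{\emptyset\}$, which exactly covers the remaining alternative.

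The main obstacle is keeping the case analysis clean and justifying each monotonicity step precisely — in particular, showing that no CCA of cardinality below $\kappa_\alpha^*$ could accidentally succeed after imposing $\delta_{s_k^*}=0$, which is where Proposition~\ref{prop:1} is invoked, and verifying that strict inclusion (not just $\subseteq$) holds in case (ii)(b), which follows because at least one pre-defense CI necessarily contained $s_k^*$ by the intersection hypothesis. The fact that the theorem's conclusions are required to hold uniformly for all $j\in J$ is immediate since the case analysis is performed pointwise in $j$, and the hypothesis $s_k^* \in \cap_{j\in J}\CI$ ensures $s_k^*$ belongs to some CI for every index $j$.
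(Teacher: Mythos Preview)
Your proposal is correct and follows essentially the same approach as the paper: restrict the feasible set by fixing $\delta_{s_k^*}=0$, use monotonicity to get $\kappa_\beta^*\geq\kappa_\alpha^*$, argue by contradiction that equality in (i) or (ii)(a) would produce a second pre-defense CI, and in (ii)(b) identify $\CCIb$ with the CIs avoiding $s_k^*$. Your write-up is in fact slightly more careful than the paper's own proof---you make explicit both the monotonicity step and the reason for \emph{strict} inclusion in (ii)(b) (that the intersection hypothesis guarantees at least one CI containing $s_k^*$), points the paper leaves implicit.
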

\begin{proof}
Appendix.
\end{proof}
Theorem~\ref{theorem:2} suggests that protecting a substation will pivot the CI (from one set to a different set). It also suggests that protecting substation~$s_k \in S$ becomes more critical if $s_k$ is related to more target lines. We discuss more defense implications next.
\vspace{-\dist}
\subsection{Defense Implications}
In this subsection, we derive the best defense for a line~$e \in E$, the best defense against CIs, and the best defense for substations based on the CIs' properties.

\subsubsection{The Best Defense for a Line~$e \in E$} 
Suppose the CCA $\Sa$ increases the flow~($\tilde{\tau}$) on line $e \in E$, \ie $\Sa \in \CSa$. Then, the best defense for $e$ is to protect the minimal set of substations~$\Db \subseteq \Sa$ that renders the new CCA $\Sp:= \Sa \setminus \Db$ ineffective, \ie $\Sp \not \in \CSa$. If $\Sa$ is a CI (\ie $\Sa \in \CCI$), then $\Db = \{s_k\}$ with $s_k \in \Sa$, \ie protecting any substation from $\Sa$ renders the new CCA ineffective.

On the other hand, CCAs might not remain static; that is, the adversary might switch between a CI~$\CI \in \CCI$ and a CCA~$\Sa \in \CSa$ to identify vulnerabilities and hide from detection. However, if the attacks ($\CI$ and $\Sa$) have common substations, \ie if $\CI \cap \Sa \neq \emptyset$, then the best defense for line~$e$ is to protect substations satisfying $s_k^* \in \CI \cap \Sa$.

\subsubsection{The Best Defense against CIs}
Suppose $\kappa^*$ denotes the security index for the attack goal $(e,\tilde{\tau})$. Then, the best defense against CIs (\ie $\CCI$) is to protect the minimal set of substations~$\Db$ that renders the new security index~$\kappa^*_{\beta}$ greater than $\kappa^*$. Thus, the adversary is required to attack more substations after~$\Db$ is protected. Note that Theorem~\ref{theorem:2} (i) and (ii-a) describe two special cases of this defense, \ie when $\Db \equiv \{s_k^*\}$.

\subsubsection{Metrics of Defense Effectiveness} 
We describe the metric used to compare defense at substations and to identify the best defense strategy. Suppose $\kappa^*\eT$ denotes the security index for the attack goal $\eT$. The security index measures the likelihood of a CCA since it is less likely to attack more substations than $\kappa^* \eT$. We define the \textit{average likelihood} to increase the flow~($\tilde{\tau}$) on all lines as \[\Ri = \frac{1}{m} \sum_{e \in E} \kappa^*\eT.\]

Using the average likelihood, we have the following definition.
\begin{definition} \label{def:Risk}
For a target flow increase~$\tilde{\tau}$, the defense effectiveness for substation $s_k$ can be estimated by calculating 
\begin{align*}
\DRb := \Rb - R(\tilde{\tau}),
\end{align*}
where $\Rb := \frac{1}{m} \sum_{e \in E} \kappa^*_{\beta,s_k} \eT$ denotes the average likelihood after protecting substation $s_k$.
\end{definition}

Theorem \ref{theorem:2} implies that $\DRb \geq 0$ for all $s_k \in S$, \ie after protecting substation~$s_k$, the number of target substations increases, while the average likelihood decreases. Thus, the best defense strategy is to protect substation $s_k^*$ such that $\Delta R_{\beta,s_k^*} \in \text{arg}\max_{s_k \in S} \{\Delta R_{\beta,s_k}\}$. Note that we derive the metric for a specific flow increase value~($\tilde{\tau}$), but we can always derive it for the case when $\tilde{\tau} \in (0,\bar{\tau}]$ is a free parameter for all target lines.
\begin{remark}
If we integrate $\kappa^*$ (\ie the likelihood of CCAs) and the likelihood of exploits at the cyber network, we can derive a \textit{risk} metric for CCAs at both the cyber and physical networks. This metric will be studied in our future work. 
\end{remark}
\vspace{-\dist}
\subsection{Application: Identifying CCAs}
In this subsection, we briefly describe how CIs based on cyber-traces and CIs based on attack goals identify CCAs and estimate their possible consequences.

CIs based on cyber-traces identify in real time individual components of CCAs, that is, the set of suspected target substations. These CIs interpret intrusion data from sensors of Intrusion Detection Systems (or other security tools) installed at substations.

CIs based on attack goals estimate the possible consequences of CCAs. The grid's defender computes these CIs and stores them in a knowledge base. The grid's defender can update this knowledge base of CIs as needed.

Fig.~\ref{fig:schem} depicts a schematic of how the CIs work together. The CIs based on cyber-traces output the set of suspected target substations~$\Sa(t)$ (at some time~$t$) to the knowledge base of CIs. The knowledge base of CIs compares this set of suspected target substations with the CIs (and their mathematical properties) to estimate possible consequences~$\eT$. This approach is analogous to signature-based (also know as blacklist) detection techniques~\cite{zhu2005alert}. Nevertheless, the CIs (signatures) are derived based on the attack template instead of direct network knowledge. Thus, combining CIs with other direct-knowledge based approaches in IDS will significantly improve defense performance and allow taking immediate actions upon most harmful attacks. We will provide details of this approach in a different paper. 
\begin{figure}[t!]
\centering
\includegraphics[width= .45\textwidth, height = 0.5 in]{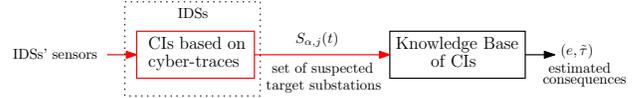}
\caption{Schematic: Identifying CCAs.}
\label{fig:schem}
\vspace{-\dist}
\end{figure}

\vspace{-\dist}
\section{Numerical Experiments}
In this section, we provide numerical experiments to illustrate the CIs, their properties, defense implications, and the metric of defense effectiveness. 

Fig.~\ref{fig:NewEngland39} shows the New England 39 bus  system used to model a power grid with $n_s = 6$ substations. We selected two target lines, $e = (2,25)$ and $e' = (16,21)$; and the flow increase $\tilde{\tau} \in \{2.5\%, 5\%,7.5\%,10\%\}$ to describe attack goals. The parameters used in our experiments were: $M = 500$, $\bar{a} = 0.1$, and the SCED base case data for the New England system taken from MATPOWER software package~\cite{zimmerman2011matpower}.  
\vspace{-\dist}
\begin{figure}[t!]
\centering
\includegraphics[width= .42\textwidth, height = 2.25 in]{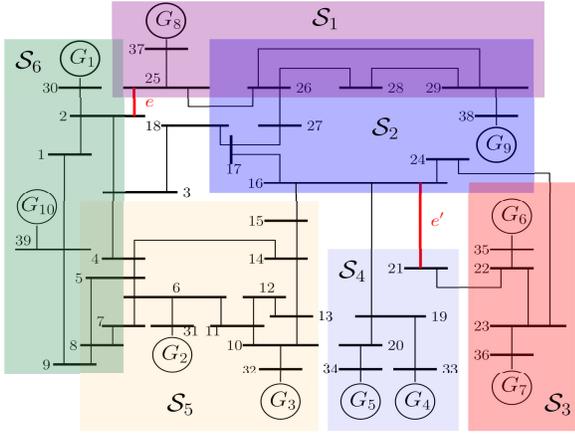}
\caption{New England 39 bus system.}
\label{fig:NewEngland39}
\vspace{-\dist}
\end{figure}
\subsection{Experiment 1. Computing the CIs}
In this experiment, we derived the CIs for the attack goals $(e,\tilde{\tau})$ and $(e',\tilde{\tau})$ using Algorithm~\ref{alg:ImpactiveAttacks}. We implemented Algorithm~\ref{alg:ImpactiveAttacks} using CVX (a package for solving convex and linear mixed integer programs~\cite{cvx}). Tables \ref{table:CIsfore} and \ref{table:CIsforeP} present the collection of CIs. We found that all attack goals have unique CIs but $(e,5 \%)$.
\vspace{-\dist}
\begin{table}[!h]
\caption{Correlation Indices for line $e=(2,25)$.}
\label{table:CIsfore}
\centering
\begin{tabular}{c c c c c c c} 
\hline
Attack goal  & CIs  & Security index \\  [0.5ex] 
\hline
 $\mathcal{S}^*_{\alpha}(e,2.5\%)$ & $\{2\}$  & $\kappa^* = 1$ \\
$\grhl{\mathcal{S}^*_{\beta}(e,2.5\%)}$ & \grhl{$\{5,6\}$} & \grhl{$\kappa^*_\beta = 2$}  \\ 
 $\mathcal{S}^*_{\alpha}(e,5\%)$ & $\{2,5,6\},\{4,5,6\}$  & $\kappa^* = 3$ \\
 $\grhl{\mathcal{S}^*_{\beta}(e,5\%)}$ & \grhl{$\{4,5,6\}$} & \grhl{$\kappa^*_\beta = 3$}  \\ 
 $\mathcal{S}^*_{\alpha}(e,7.5\%)$ & $\{1,2,5,6\}$  & $\kappa^* = 4$ \\
 $\grhl{\mathcal{S}^*_{\beta}(e,7.5\%)}$ & \grhl{$\{\emptyset\}$} & \grhl{$\kappa^*_\beta = 0$}  \\ 
 $\mathcal{S}^*_{\alpha}(e,10\%)$ & $\{S\}$  & $\kappa^* = 6$ \\
 $\grhl{\mathcal{S}^*_{\beta}(e,10\%)}$ & \grhl{$\{\emptyset\}$} & \grhl{$\kappa^*_\beta = 0$}  \\ 
 \hline
\end{tabular}

\begin{tablenotes}
\small
\item \textit{Note:} $\CCI$ (resp. $\CCIb$) denotes the CIs before (resp. after) protecting $s_k^* = 2$.  $\{\emptyset\}$ (or $\kappa^* = 0$) implies the attack is ineffective.
\end{tablenotes}
\end{table}

\begin{table}[!h]
\caption{Correlation Indices for line $e'=(21,24)$.}
\label{table:CIsforeP}
\centering
\begin{tabular}{c c c c c c c} 
\hline
Attack goal  & CIs  & Security index \\  [0.5ex] 
\hline
 $\mathcal{S}^*_{\alpha}(e',2.5\%)$ & $\{2\}$  & $\kappa^* = 1$ \\
$\grhl{\mathcal{S}^*_{\beta}(e',2.5\%)}$ & \grhl{$\{5,6\}$} & \grhl{$\kappa^*_\beta = 2$}  \\ 
 $\mathcal{S}^*_{\alpha}(e',5\%)$ & $\{4,5,6\}$  & $\kappa^* = 3$ \\
 $\grhl{\mathcal{S}^*_{\beta}(e',5\%)}$ & \grhl{$\{4,5,6\}$} & \grhl{$\kappa^*_\beta = 3$}  \\ 
 $\mathcal{S}^*_{\alpha}(e',7.5\%)$ & $\{\emptyset\}$  & $\kappa^* = 0$ \\
 $\grhl{\mathcal{S}^*_{\beta}(e',7.5\%)}$ & \grhl{$\{\emptyset\}$} & \grhl{$\kappa^*_\beta = 0$}  \\ 
 \hline
\end{tabular}

\begin{tablenotes}
\small
\item \textit{Note:}~$\mathcal{S}^*_\alpha(e',\tilde{\tau})$ (resp. $\mathcal{S}^*_\beta(e',\tilde{\tau})$) denotes the CIs before (resp. after) protecting $s_k^* = 2$. $\{\emptyset\}$ (or $\kappa^* = 0$) implies the attack is ineffective.
\end{tablenotes}
\end{table}
\subsection{Experiment 2. CIs Dependence on the parameter $\bar{a}$}
In this experiment, we studied the CIs' dependence on $\bar{a}$. Fig.~\ref{fig:depa} shows how the security index $\kappa^*$ changes as we increase the attack signal max value $\bar{a}$. We found that the security index decreases as $\bar{a}$ increases. This result imply that if the defender increases $\bar{a}$ in the attack template, the defense implications become more conservative.
\vspace{-\dist}

\begin{figure}[t!]
\centering
\includegraphics[width= .45\textwidth, height = 2 in]{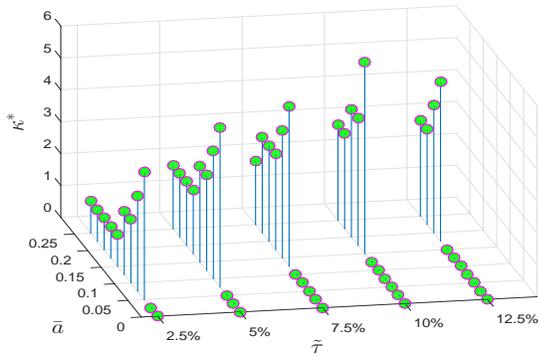}
\caption{CIs' dependence on $\bar{a}$. Target Line $e=(2,25)$}
\label{fig:depa}
\vspace{-\dist}
\end{figure}
\subsection{Experiment 3. Mathematical Properties of CIs}
We studied the mathematical properties of CIs and defense implications from Theorem~\ref{theorem:2}. Tables~\ref{table:CIsfore} and \ref{table:CIsforeP} show, respectively, the CIs for the attack goals $\eT$ and $(e', \tilde{\tau})$, before and after protecting substation~$s_k^* = s_2$. Before protecting substation~$s_k^* = s_2$, the CIs have the following defense implications: for the attack goal~$(e, 5\%)$, subordinated attacks of the CI $\CI = \{2,5,6\}$ are ineffective (Propositions~\ref{prop:2} and \ref{prop:1}); and the CCA $S^* = \{2,4,5,6\}$ can increase the flow $\tilde{\tau} = 5\%$ on both lines $e$ and $e'$ (Lemma~\ref{lemma:1} and Theorem~\ref{theorem:1}). And, after protecting substation~$s_k^* = 2$, the CIs have the following defense implications: for the attack goal $(e',2.5\%)$, the new security index satisfies $\kappa^*_\beta = 2 > 1 = \kappa^*$~(Theorem~\ref{theorem:2} (i)), $\mathcal{S}^*_\beta(e,5\%) \subset \mathcal{S}^*_\alpha(e,5\%)$~(Theorem~\ref{theorem:2} (ii-b)), and $\mathcal{S}^*_\beta(e,7.5\%) = \{\emptyset\}$~(Theorem~\ref{theorem:2}~(iii)). 
\vspace{-\dist}
\subsection{Experiment 4. The Metric of Defense Effectiveness.}
In this final experiment, we computed the metric of defense effectiveness~$\Delta R_{s_k}(\tilde{\tau})$ for all substations $s_k \in S$ and $\tilde{\tau} \in \{ 5\%,7.5\%\}$. Table~\ref{table:risk} presents the results. These results imply that the best defense is achieved by protecting substation~$s_k^*=s_2$.
\vspace{-\dist} 
\begin{table}[!t]
\caption{Metric of Defense Effectiveness.}
\label{table:risk}
\centering
\begin{tabular}{c c c c c c c} 
\hline
 Substation & \grhl{$s_1$} & $s_2$ & \grhl{$s_3$} & \grhl{$s_4$} & \grhl{$s_5$} & \grhl{$s_6$} \\  [0.5ex] 
 \hline
 $\Delta R_{s_k}(5\%)$ & \grhl{0.11} &  0.70 & \grhl{0.02} & \grhl{0.02} & \grhl{0.30} & \grhl{0.30} \\ 
 $\Delta R_{s_k}(7.5\%)$ & \grhl{0.13} &  0.54 & \grhl{0.02} & \grhl{0.07} & \grhl{0.2} & \grhl{0.2} \\
 \hline
\end{tabular}
\vspace{-\dist}
\end{table}

\section{Conclusion}
In this paper, we provided a method to derive Correlation Indices~(CIs) based on attack goals, which can be used to estimate attack consequences and identify critical substations during coordinated attacks. Compared to existing approaches, our method does not rely on case-by-case numerical simulations of control attacks with few attack goals. Instead, our method computes the CIs using an analytical attack template that can model more sophisticated attacks, such as measurement attacks. We modeled the attack template as a bilevel optimization program and derived Algorithm~\ref{alg:ImpactiveAttacks} to solve it. Algorithm~\ref{alg:ImpactiveAttacks} computes the CIs for any given attack goal. These CIs describe strongly correlated attacks, since the adversary reaches the goal by attacking the least number of target substations. We then used a set-theoretic approach to derive the CIs' properties. These properties suggest defense implications against coordinated attacks, including the best defense for a transmission line, the best defense against strongly correlated attacks, and the metric of defense effectiveness. Thus, our method to compute CIs and their properties presents the benefit of deployment simplicity but faces one limitation, namely the computational performance of Algorithm~\ref{alg:ImpactiveAttacks}. However, given that there is only few substations in the power grid, the computation performance is unlikely to be a problem. In our future work, we will use the CIs and their defense implications together with Intrusion Detection Systems to protect the grid against coordinated attacks.
\bibliographystyle{IEEEtran}
\bibliography{bib/cm_ref}
\section{Appendix}
\begin{proof}
(Proposition~\ref{prop:1}) Assume $\CI \in \CCI$. We partition the CI in two arbitrary disjoint sets, \ie $\CI = \Sp \cup D$ satisfying $|\Sp| < \kappa^*:= |\CI|$. Since CIs are minimum cardinality CCAs, then $\Sp \not \in \CSa$, which proves the proposition. 
\end{proof}
\vspace{-\dist}
\begin{proof}
(Lemma~\ref{prop:2}) Suppose, to get a contradiction, $\Sa \in \CSa$; then  any super-set~$S^*$ of $\Sa$, \ie $\Sa \subseteq S^*$, is effective, \ie $S^* \in \CSa$. In particular, $\Sp \equiv S^*$ reaches the goal $\eT$, which contradicts $\Sp \not \in  \CSa$.  
\end{proof}
\vspace{-\dist}
\begin{proof}
(Lemma~\ref{lemma:1}) We prove the lemma by cases. (i) Suppose $\CI \in \CCI$ is a super-set of $\Sp$, \ie $\CI \supseteq  \Sp$. It follows that $\Sp \cup \CI \equiv \CI \in \CCI \subseteq \CSa$. Similarly, (ii) suppose that $\CI \subseteq \Sp$. Then, it follows that $ \Sp \cup \CI = \Sp \in \CSa$. Finally, (iii) suppose the CI $\CI$ is neither a subset or super-set of $\Sp$. Assume, to get a contradiction, that $(\Sp \cup \CI) \not \in \CSa$. Then, Proposition~\ref{prop:2} implies that any CCA $\Sa \subseteq (\Sp \cup \CI)$ does not reach the goal $\eT$. In particular, $\Sa \equiv \CI \subset (\Sp \cup \CI)$ does not reach $\eT$, \ie $\CI \not \in  \CSa$. This yields the contradiction.
\end{proof}
\vspace{-\dist}
\begin{proof}
(Theorem~\ref{theorem:1}) We partition the CCA $S^*$ in the union of two disjoint sets, \ie $S^* = \CI \cup (S^* \setminus \CI)$ for all $j \in J$. Then, by Lemma~\ref{lemma:1} we have $S^* \in \mathcal{S}_\alpha(e_j,\tilde{\tau}_j)$ for all $j \in J$, which proves the theorem.
\end{proof}
\vspace{-\dist}
\begin{proof}
(Theorem~\ref{theorem:2}) Assume the operator protects substation $s_k^*$. Moreover, assume that $\CI \in \CCI$ is unique and the attack remains feasible after defense. Suppose, to get a contradiction, that the new security index satisfy $\kappa_\beta^* = \kappa^*$. This implies that the new CI satisfies $\CIb \in \CCI$, which contradicts the fact that $\CCI = \{\CI\}$. This proves (i). On the other hand, suppose that $\CI \in \CI$ is not unique. Part (ii-a) can be proven using the same arguments as in (i). We prove part (ii-b) as follows. Define $\mathcal{S}: = \{\CI \in \CCI~|~{s_k^*} \cap \CI = \emptyset\}$. Then we partition the collection $\CCI$ as follows $\CCI = \mathcal{S} \cup (\CCI \setminus \mathcal{S})$. Proposition 1 implies that after defense $\CI \setminus \{s_k^*\} \not \in \CCIb$ for all $\CI \in (\CCI \setminus \mathcal{S})$. Thus, $\CCIb \equiv \mathcal{S}$, and therefore $\CCIb \subset \CCI$. This proves (ii-b). Finally, (c) follows trivially.  
\end{proof}
\vspace{-\dist}
\end{document}